\newtheorem{prop}{Proposition}[section]
\newtheorem{remark}{Remark}[section]
\date{}
\begin{document}

\title{Bridging the Gap between Individuality and Joint Improvisation in the Mirror Game}
\author{Chao Zhai, Michael Z. Q. Chen, Francesco Alderisio, Alexei Yu. Uteshev and Mario di Bernardo \thanks{Corresponding author: Michael Z. Q. Chen (mzqchen@outlook.com). Chao Zhai and Michael Z. Q. Chen are with the Department of Mechanical Engineering, University of Hong Kong, Haking Wong Building, Pok Fu Lam Road, Hong Kong. Alexei Yu. Uteshev is with the Faculty of Applied Mathematics, St. Petersburg State University, Universitetskij pr.35, Petrodvorets, 198504, St. Petersburg, Russia. Francesco Alderisio and Mario di Bernardo are with the Department of Engineering Mathematics, University of Bristol, Merchant Venturers' Building, Woodland Road, Bristol, BS8 1UB, United Kingdom. Mario di Bernardo is also with the Department of Electrical Engineering and Information Technology, University of Naples Federico II, 80125 Naples, Italy.}}

\maketitle

\begin{abstract}

Extensive experiments in Human Movement Science suggest that solo motions are characterized by unique features that define the individuality or motor signature of people. While interacting with others, humans tend to spontaneously coordinate their movement and unconsciously give rise to joint improvisation. However, it has yet to be shed light on the relationship between individuality and joint improvisation. By means of an ad-hoc virtual agent, in this work we uncover the internal mechanisms of the transition from solo to joint improvised motion in the mirror game, a simple yet effective paradigm for studying interpersonal human coordination.
According to the analysis of experimental data, normalized segments of velocity in solo motion are regarded as individual motor signature, and the existence of velocity segments possessing a prescribed signature is theoretically guaranteed.
In this work, we first develop a systematic approach based on velocity segments to generate \emph{in-silico} trajectories of a given human participant playing solo.
Then we present an online algorithm for the virtual player to produce joint improvised motion with another agent while exhibiting some desired kinematic characteristics, and to account for movement coordination and mutual adaptation during joint action tasks.
Finally, we demonstrate that the proposed approach succeeds in revealing the kinematic features transition from solo to joint improvised motions, thus revealing the existence of a tight relationship between individuality and joint improvisation.

\end{abstract}

\section{Introduction}

People suffering from social deficiencies (i.e., schizophrenia or autism) find it hard to engage in social activities and interact with others, which inevitably brings sorrow to themselves and their relatives \cite{boraston07,couture06}. The theory of similarity in Social Psychology suggests that individuals prefer to cooperate with others sharing similar morphological and behavioral features, and that they tend to unconsciously coordinate their movements \cite{fol82,schmidt15,walton15}. It has been shown that motor processes caused by interpersonal coordination are closely related to mental connectedness, and that motor coordination between two people contributes to social attachment \cite{wil09,raff15}.

%%%%%%%%%%%%%%%%%%%%%%%%%%%%%%%%%%%%%%%%%%%%%%%%%%

The \emph{mirror game} provides a simple paradigm to study social interactions and the onset of motor coordination among human beings, as it happens in improvisation theater, group dance and parade marching \cite{noy11,noyfront}. In order to enhance social interaction through motor coordination, it would be desirable to create a virtual player (VP) or computer avatar capable of playing the mirror game with a human subject (typically the patient) either by mimicking similar kinematic characteristics or producing dissimilar ones \cite{chao_smc}. Indeed, this allows to modulate the kinematic similarity of the VP while maintaining a certain level of coordination with the human player (HP) so that the s/he is unconsciously guided towards the direction of some desired movement features.

%%%%%%%%%%%%%%%%%%%%%%%%%%%%%%%%%%%%%%%%%%%%%%%%%%

Motor coordination between two or more effectors in biological systems emerges as a result of the integration of several body parts and functions. Such coordination occurs through two types of control actions: feedback and feed-forward \cite{jor99}. The motor system is able to correct the deviation from the desired movement by means of feedback control, whilst feed-forward control allows it to reconcile the interdependency of the involved effectors and preplan the response to the incoming sensory information, without taking into account how the system reacts to the command signal \cite{des00}.
%Specifically, effectors such as muscles, joints and limbs are coupled with each other, and the state of one effector normally depends on that of the others. Feed-forward control is able to coordinate these effectors in a certain behavioral task beforehand.
Inspired by the above motor process of the human body, a computational approach based on optimal control has been proposed in the literature for the VP to interact with other participants and reconcile movement coordination with its own prescribed kinematic features \cite{chao_cdc15,chao_ji15}.

%%%%%%%%%%%%%%%%%%%%%%%%%%%%%%%%%%%%%%%%%%%%%%%%%

The main challenge is to develop a mathematical model capable of driving the VP to joint-improvise with a HP in the mirror game, while guaranteeing an assigned \emph{motor signature} as defined in \cite{piotr15}.
The first step towards this goal is to design a computational architecture able to generate \emph{in-silico} trajectories reproducing the motor signature exhibited by a certain HP playing solo. In so doing, we propose an approach based on velocity segments \cite{noy14}. The second step is to provide such architecture with an online algorithm allowing the virtual player to produce joint improvised motions and interact with a HP or another VP.
Much research effort has been spent on the design of control architectures for the virtual agent or robot \cite{noy11,chao_cdc15,li16,ata16,hirche14,dumas14,chao_cdc14,kelso_plos09}, but only pre-recorded time series of human players in solo trials have been used to generate the joint motion of a customized VP \cite{chao_mg15}, which limits its movement diversity due to the finite number of available pre-recorded trajectories. The approach we propose here overcomes this drawback by allowing the VP to autonomously exhibit any motor signature with specified kinematic features (characterizing the solo motion of a given HP) during the interaction with another agent.

%%%%%%%%%%%%%%%%%%%%%%%%%%%%%%%%%%%%%%%%%%%%%%%%%

The outline of this paper is given as follows. In Section \ref{sec:prob} we introduce the experimental paradigm of the mirror game, a quantitative marker of motor signatures, and their construction method. In Section \ref{sec:ca} we focus on the design of a computational architecture for the VP. Specifically, we develop an algorithm capable of generating solo motions with prescribed kinematic features, followed by an online algorithm allowing the VP to produce joint improvised motion with another agent. Experimental validations is carried out in Section \ref{sec:exp} to test the proposed approach. Finally, in Section \ref{sec:con} we draw conclusions and discuss future directions.

\section{Preliminaries}\label{sec:prob}

\subsection{Mirror game}

The mirror game is a simple yet effective paradigm to investigate the onset of social motor coordination between two players and describe their movement imitation at high temporal and spatial resolution \cite{noy11,noy14,noy16}. Figure~\ref{mg} shows the experimental set-up at the University of Montpellier, France.

\begin{figure}
\scalebox{0.75}[0.75]{\includegraphics{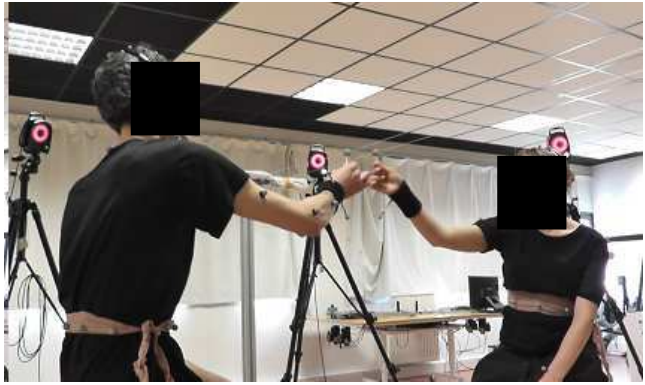}}\centering
\caption{\label{mg} Mirror game set-up at the University of Montpellier \cite{chao_ji15}. Two horizontal strings are mounted perpendicularly at eye level and centrally between the two human participants. Two small balls are mounted on the parallel strings, respectively. Human participants are instructed to hold the handle beneath each ball and move it along the string back and forth. Cameras are installed around the participants to collect experimental data and record their movement trajectories. In solo trials, only one human participant is instructed to perform the motion. In joint trials, two human participants are seated opposite each other and interact while moving their respective ball.}
\end{figure}

The mirror game can be played in three different experimental conditions \cite{piotr15}:
\begin{enumerate}
  \item Solo Condition: This is an individual trial. Participants perform the game on their own and try to create interesting motions.
  \item Leader-Follower Condition: This is a collaborative round, whose purpose is for the participants to create synchronized motions. One player leads the game, while the other tries to follow the leader's movement.
  \item Joint-Improvisation Condition: Two players are required to imitate each other, create synchronized and interesting motions and enjoy playing together, without any designation of leader and follower roles.
\end{enumerate}

Human movements in solo condition reflect their intrinsic dynamics, i.e., their individual motor signature \cite{piotr15}. On the other hand, participants reconcile their respective intrinsic dynamics with the communal goal (movement synchronization) in leader-follower or joint-improvisation condition. Here, we focus on the mathematical modeling of human coordination in solo and joint improvisation (JI) condition, and shed light on their interconnection.

\subsection{Motor signature}

Data analysis of experimental recordings reveals the self-similarity characteristics of human hand movements in solo trials, thus allowing to identify and distinguish human participants by comparing the kinematic features of their solo motions \cite{noy14,piotr14}. Indeed, motor signatures refer to the unique, time-persistent kinematic characteristics of human movements in solo condition \cite{piotr15,piotr14}. It has been shown that a possible candidate of motor signature is the probability distribution function (PDF) of velocity time series in solo trials \cite{piotr14}. As a consequence, a control architecture based on pre-recorded HP velocity profiles was developed for the VP to achieve real-time interaction in leader-follower and joint-improvisation conditions \cite{chao_cdc15,chao_ji15,chao_cdc14}.

Notably, skewness and kurtosis of normalized velocity segments provide also a suitable complement as marker of motor signature \cite{noy14}. Specifically, segments represent periods and portions of motion between two consecutive events of zero velocity, while normalized (or base) segments are obtained by normalizing the original ones over the time interval $[0,1]$ and the corresponding velocity integral. Figure~\ref{sg} gives a graphical representation of velocity-segments-based individual motor signatures, represented by the following ellipse:

\begin{equation}\label{ellip}
\frac{(z_s-\mu_s)^2}{\sigma^2_s}+\frac{(z_k-\mu_k)^2}{\sigma^2_k}=1
\end{equation}
where $z_s$ and $z_k$ represent the horizontal and vertical coordinates in the skewness-kurtosis (S-K) plane, with $\mu_s$ and $\mu_k$ ($\sigma_s$ and $\sigma_k$) referring to mean values (standard deviations) of skewness and kurtosis of the normalized velocity segments, respectively.

\begin{figure}
\scalebox{0.8}[0.8]{\includegraphics{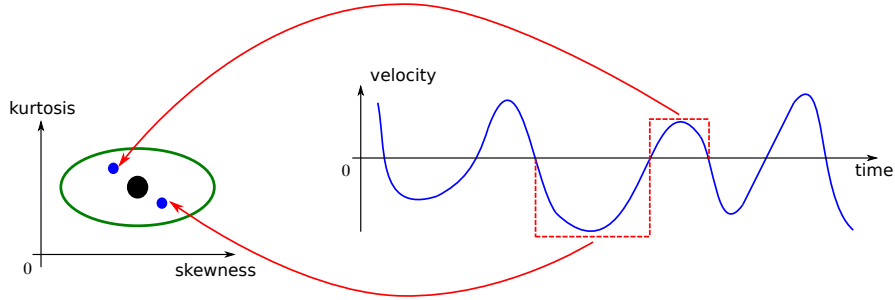}}\centering
\caption{\label{sg} Motor signature of a human participant based on velocity segments in the mirror game \cite{noy14}. The blue curve denotes the velocity time series of a human participant in a solo trial. The velocity segments in the red dashed boxes are normalized and then mapped as two blue points in the skewness-kurtosis (S-K) plane. Solo motions of a human participant in the S-K plane correspond then to a green ellipse, whose center is individuated by a black circle, which contains all the mapped segments.}
\end{figure}

Our goal is to develop a computational architecture for the VP to produce human-like solo movements and joint improvised trajectories with any desired values for skewness and kurtosis of normalized velocity segments, such that the kinematic features of a certain HP can be reproduced without making use of limited pre-recorded trajectories.

\subsection{Base segment of velocity}

It has been demonstrated that smooth point-to-point movements can be generated by minimizing the time integral of the jerk magnitude squared \cite{flash85}. This can be formulated as the following minimization problem:
\begin{equation}\label{min}
\min_{x}J(x)
\end{equation}
where
$$
J(x)=\frac{1}{2}\int_{0}^{1}\left(\frac{d^3x}{dt^3}\right)^2dt
$$
with $x(t), t\in [0, 1]$ denoting a desired position trajectory. In order to solve the optimization problem \eqref{min}, we first compute
\begin{equation}
J(x+c\delta x)=\frac{1}{2}\int_{0}^{1}\left(\frac{d^3x}{dt^3}+c \frac{d^3\delta x}{dt^3} \right)^2dt
\end{equation}
where $c$ is a constant and $\delta x(t), t\in [0, 1]$ is a smooth curve with the constraints
\begin{equation}\label{ini}
\delta x(0)=\frac{d^2\delta x(0)}{dt^2}=\frac{d^3\delta x(0)}{dt^3}=0
\end{equation}
and
\begin{equation}\label{fin}
\delta x(1)=\frac{d^2\delta x(1)}{dt^2}=\frac{d^3\delta x(1)}{dt^3}=0
\end{equation}
We then obtain the increment of $J(x)$
\begin{equation}
J(x+c\delta x)-J(x)=\frac{c}{2}\int_{0}^{1}\frac{d^3\delta x}{dt^3}\left(2\frac{d^3x}{dt^3}+c \frac{d^3\delta x}{dt^3}\right)dt
\end{equation}
that leads to
\begin{equation}
\lim_{c\rightarrow0}\frac{J(x+c\delta x)-J(x)}{c}=\int_{0}^{1}\frac{d^3\delta x}{dt^3}\cdot\frac{d^3x}{dt^3}dt
\end{equation}

From Equations \eqref{ini} and \eqref{fin} it follows that
\begin{equation}
\int_{0}^{1}\frac{d^3\delta x}{dt^3}\cdot\frac{d^3x}{dt^3}dt=-\int_{0}^{1}\delta x \cdot \frac{d^6x}{dt^6}dt
\end{equation}
The optimal trajectory should then satisfy
\begin{equation}\label{incrJ}
\lim_{c\rightarrow0}\frac{J(x+c\delta x)-J(x)}{c}=-\int_{0}^{1}\delta x \cdot \frac{d^6x}{dt^6}dt=0
\end{equation}
Since $\delta x$ can be an arbitrary function with initial condition (\ref{ini}) and terminal condition (\ref{fin}), Equation \eqref{incrJ} leads to a sixth-order differential equation
\begin{equation}\label{6th}
\frac{d^6x}{dt^6}=0
\end{equation}
Thus, an ideal solution to Equation \eqref{6th} is given by a fifth-order polynomial in $t$
\begin{equation}
x(t)=\sum_{i=0}^{5}a_it^i, \quad t\in [0,1]
\end{equation}
where $a_i, i\in \{0,1,2,3,4,5\}$ represent unknown coefficients. Therefore, the desired velocity segments correspond to a fourth-order polynomial in $t$.

In order to create a base segment of velocity that combines smooth motion with the desired kinematic features described by some individual motor signature, we define a probability distribution function
\begin{equation}
f(t):=\sum_{i=0}^{4} b_it^i, \quad t\in [0,1]
\end{equation}
where $b_i, i\in \{0,1,2,3,4\}$ represent unknown coefficients, and with the following boundary conditions
\begin{equation}\label{term}
f(0)=f(1)=0
\end{equation}
Mean value $\mu$ and variance $\sigma^2$ of $f(t)$ are defined as follows:
\begin{equation}\label{mu_sig}
\mu:=\int_{0}^{1}\tau f(\tau)d\tau, \quad \sigma^2:=\int_{0}^{1}(\tau-\mu)^2f(\tau)d\tau
\end{equation}

Since the integral of $f(t)$ over the time interval $[0,1]$ (i.e., the area of the base segment) must be unitary, that is
\begin{equation}\label{area}
\int_{0}^{1}f(\tau)d\tau=1
\end{equation}
Equations \eqref{term}, \eqref{mu_sig} and \eqref{area} yield $b_0=0$ and the following matrix equation
\begin{equation}\label{bmat}
\left(
  \begin{array}{cccc}
    1 & 1 & 1 & 1 \\
    \frac{1}{2} & \frac{1}{3} & \frac{1}{4} & \frac{1}{5} \\
    \frac{1}{3} & \frac{1}{4} & \frac{1}{5} & \frac{1}{6} \\
    \frac{1}{4} & \frac{1}{5} & \frac{1}{6} & \frac{1}{7} \\
  \end{array}
\right)\mathbf{b}=\left(
                    \begin{array}{c}
                      0 \\
                      1 \\
                      \mu \\
                      \mu^2+\sigma^2 \\
                    \end{array}
                  \right)
\end{equation}
where $\mathbf{b}=(b_1,b_2,b_3,b_4)^T$.
Likewise, the definitions of skewness $s$ and kurtosis $k$
\begin{equation}
s:=\frac{1}{\sigma^3}\int_{0}^{1}(\tau-\mu)^3f(\tau)d\tau, \quad k:=\frac{1}{\sigma^4}\int_{0}^{1}(\tau-\mu)^4f(\tau)d\tau
\end{equation}
are respectively equivalent to
\begin{equation}\label{s_sig}
\mathbf{b}^T\left(
              \begin{array}{c}
               \frac{1}{5}-\frac{3\mu}{4}+\frac{2\mu^2}{3} \\
               \frac{1}{6}-\frac{3\mu}{5}+\frac{\mu^2}{2} \\
               \frac{1}{7}-\frac{\mu}{2}+\frac{2\mu^2}{5} \\
               \frac{1}{8}-\frac{3\mu}{7}+\frac{\mu^2}{3} \\
              \end{array}
            \right)=s\sigma^3
\end{equation}
and
\begin{equation}\label{k_sig}
\mathbf{b}^T\left(
              \begin{array}{c}
               \frac{1}{6}-\frac{4\mu}{5}+\frac{3\mu^2}{2}-\mu^3 \\
               \frac{1}{7}-\frac{2\mu}{3}+\frac{6\mu^2}{5}-\frac{3\mu^3}{4} \\
               \frac{1}{8}-\frac{4\mu}{7}+\mu^2-\frac{3\mu^3}{5} \\
               \frac{1}{9}-\frac{\mu}{2}+\frac{6\mu^2}{7}-\frac{\mu^3}{2} \\
              \end{array}
            \right)=k\sigma^4
\end{equation}

By substituting $\bf{b}$ in Equations \eqref{s_sig} and \eqref{k_sig} with the solution to Equation \eqref{bmat}, we obtain a fourth-order polynomial system with two variables ($\mu$ and $\sigma$) and two parameters ($s$ and $k$) as follows
\begin{equation}\label{sys}
\left\{
  \begin{array}{ll}
    \mathcal{F}(\mu,\sigma,s)=0\\
    \mathcal{G}(\mu,\sigma,k)=0
  \end{array}
\right.
\end{equation}
where $\mathcal{F}(\mu,\sigma,s)=0$ and $\mathcal{G}(\mu,\sigma,k)=0$ correspond to (\ref{s_sig}) and (\ref{k_sig}), respectively.
The following result holds for the solution to Equation (\ref{sys}).
\begin{prop}\label{real_sol}
There exist real solutions $\mu$ and $\sigma$ to the polynomial system \eqref{sys} for any given positive parameters $s$ and $k$ characterizing the motor signature of a human player.
\end{prop}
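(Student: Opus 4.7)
My plan is to reduce the bivariate polynomial system (\ref{sys}) to a single univariate equation in $\mu$ with coefficients depending on the parameters $s,k$, and then produce a real root via the intermediate value theorem; the corresponding value of $\sigma$ will then be real by construction.

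First, I would solve (\ref{bmat}) explicitly. The $4\times 4$ coefficient matrix is a Cauchy-type matrix with strictly positive determinant, hence invertible, while the right-hand side depends affinely on $\sigma^{2}$ with coefficients polynomial in $\mu$. Consequently $\mathbf{b}$ takes the affine form $\mathbf{b}(\mu,\sigma^{2})=\mathbf{b}_0(\mu)+\sigma^{2}\mathbf{b}_1$, with $\mathbf{b}_0(\mu)$ quadratic in $\mu$ and $\mathbf{b}_1\in\mathbb{R}^{4}$ constant. Substituting into (\ref{s_sig}) and (\ref{k_sig}) yields
\[
\mathcal{F}(\mu,\sigma,s)=A_3(\mu)+B_3(\mu)\sigma^{2}-s\sigma^{3}=0,\qquad \mathcal{G}(\mu,\sigma,k)=A_4(\mu)+B_4(\mu)\sigma^{2}-k\sigma^{4}=0,
\]
where $A_3,B_3,A_4,B_4$ are explicit polynomials in $\mu$ of low degree. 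Since $\mathcal{G}=0$ is quadratic in $u:=\sigma^{2}$ with leading coefficient $-k<0$, it admits the non-negative root
\[
u(\mu,k)=\frac{B_4(\mu)+\sqrt{B_4(\mu)^{2}+4kA_4(\mu)}}{2k}
\]
whenever the radicand is non-negative, which identifies $\sigma=\sqrt{u(\mu,k)}$ as a (locally) smooth function of $\mu$ for fixed $k>0$.

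Second, inserting $\sigma=\sqrt{u(\mu,k)}$ into $\mathcal{F}=0$ and rationalising the surds clears denominators to produce a univariate polynomial $R(\mu;s,k)$ whose real roots in $(0,1)$ correspond to real solutions of (\ref{sys}); equivalently, $R$ can be obtained as the resultant $\mathrm{Res}_{\sigma}(\mathcal{F},\mathcal{G})$. Exploiting the symmetry $t\mapsto 1-t$ of the base-segment construction, it suffices to treat $\mu\in(0,\tfrac{1}{2}]$. I would then examine the signs of $R(\mu;s,k)$ at the endpoints of this interval (and at convenient interior test points) and argue that for every $s,k>0$ the polynomial changes sign across the interval; the intermediate value theorem then yields a real root $\mu^{\ast}\in(0,1)$, and $\sigma^{\ast}=\sqrt{u(\mu^{\ast},k)}$ is real by construction.

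The main obstacle is this last sign-change step: the coefficients of $R$ are polynomial in $\mu$ but only rational in $(s,k)$, so establishing a sign change uniformly over the whole positive quadrant requires a careful analysis of the dominant terms of $R$ in the regimes $s\to 0$, $s\to\infty$, $k\to 0$, $k\to\infty$. A useful shortcut is to check the parity of $\deg_{\mu}R$: if this degree is odd for generic $(s,k)>0$, then $R$ necessarily possesses a real root for purely algebraic reasons, bypassing the interval analysis. Failing both of these, I would fall back on a degree-theoretic argument, showing that the smooth map $\Psi\colon(\mu,\sigma)\mapsto(s(\mu,\sigma),k(\mu,\sigma))$ induced by Steps 1--2 is proper onto a suitable slab and has non-zero mapping degree onto the open positive quadrant (verified by explicit computation at a single regular value), which gives surjectivity and hence real preimages $(\mu,\sigma)$ for every $(s,k)\in(0,\infty)^{2}$.
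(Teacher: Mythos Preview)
Your overall strategy---eliminate one variable and apply the intermediate value theorem to the resulting univariate polynomial---is the same as the paper's, but you miss the substitution that makes it work and leave the decisive step unproved.

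The paper first divides $\mathcal{F}$ and $\mathcal{G}$ by $\sigma^{3}$ and $\sigma^{4}$ and then sets $M=(\mu-\tfrac12)/\sigma$, $\eta=1/\sigma^{2}$. In these variables $\mathcal{F}_1=0$ is \emph{linear} in $\eta$, so the elimination is trivial (no surds, no resultant), and substitution into $\mathcal{G}_1=0$ gives the single sextic
\[
\mathcal{G}_2(M,s,k)=8M^{6}-36M^{4}-80sM^{3}+(63-27k)M^{2}-7s^{2}.
\]
The sign check is then immediate: $\mathcal{G}_2(0,s,k)=-7s^{2}<0$ and $\mathcal{G}_2(2\sqrt{3},s,k)>0$ for the physically relevant ranges $s\in(0,0.5)$, $k\in(1.5,3)$, so a root $M_0\in(0,2\sqrt{3})$ exists; positivity of $\eta$ (and hence reality of $\sigma$) follows from the explicit formula $\eta=\tfrac{28}{3}(M^{2}+3+s/M)$. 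Your elimination order---solving $\mathcal{G}=0$ quadratically for $\sigma^{2}$ and substituting into $\mathcal{F}$---introduces a square root that must be rationalised, inflating the degree of $R$ and turning the endpoint sign analysis (which you yourself flag as the ``main obstacle'') into a genuinely hard computation rather than a one-line check.

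Your proposed fallbacks do not rescue the argument. The eliminant has \emph{even} degree (the paper's $\mathcal{G}_2$ is degree~$6$, and your $R$ will likewise be even), so the odd-degree shortcut fails. The degree-theoretic surjectivity argument is too vague to assess and, in any case, aims at more than the paper actually proves: despite the wording of the proposition, the paper's sign estimate is carried out only for $s\in(0,0.5)$, $k\in(1.5,3)$, not the entire positive quadrant. In short, the content of the proof lives precisely in the step you postpone; the substitution $(M,\eta)$ is the missing idea that makes that step elementary.
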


\begin{proof}
See Appendix.
\end{proof}

\begin{remark}
Proposition \ref{real_sol} guarantees the existence of velocity segments satisfying smooth point-to-point movements with specified skewness and kurtosis. It is possible to prove Proposition \ref{real_sol} with the aid of discriminant \cite{kali_02} and resultant \cite{boch_07}.
\end{remark}

Analytical solutions to the polynomial system \eqref{sys} are not always available, hence numerical methods ($i.e.$, polynomial continuation) have to be used to find approximate solutions of mean value $\mu$ and standard deviation $\sigma$ for given skewness $s$ and kurtosis $k$. By means of approximated values of mean $\mu$ and standard deviation $\sigma$, it is possible to obtain the coefficient vector $\mathbf{b}=(b_1,b_2,b_3,b_4)^T$ and the base segment of velocity $f(t)=\sum_{i=0}^{4}b_it^i$ via Equation (\ref{bmat}).

For the sake of computational simplicity, in this work we assign all the four parameters $\mu$, $\sigma$, $s$ and $k$ characterizing the desired PDF $\mathcal{P}$ of a given HP, and then select three distinct time instants ($t_1$, $t_2$, $t_3$) for the fitted segment of velocity $h(t):=\sum_{i=0}^{4}c_it^i$ to match such velocity profile

\begin{equation}\label{fit}
h(t_i)=\mathcal{P}(t_i,\mu,\sigma,s,k), \quad t_i \in (0,1) \quad i \in \{1, 2, 3\}
\end{equation}
with
\begin{equation}\label{hterm}
h(0)=h(1)=0
\end{equation}

By combining Equations \eqref{fit} and \eqref{hterm}, we obtain the matrix equation
\begin{equation}\label{tildeb}
\left(
  \begin{array}{cccc}
    1 & 1 & 1 & 1 \\
    t_1 & t_1^2 & t_1^3 & t_1^4 \\
    t_2 & t_2^2 & t_2^3 & t_2^4 \\
    t_3 & t_3^2 & t_3^3 & t_3^4 \\
  \end{array}
\right)\mathbf{c}=\left(
                    \begin{array}{c}
                      0 \\
                      \mathcal{P}(t_1,\mu,\sigma,s,k)\\
                      \mathcal{P}(t_2,\mu,\sigma,s,k)\\
                      \mathcal{P}(t_3,\mu,\sigma,s,k)\\
                    \end{array}
                  \right)
\end{equation}
with $\mathbf{c}=(c_1,c_2,c_3,c_4)^T$. The solution to Equation \eqref{tildeb} gives the fitted segment of velocity
\begin{equation}\label{baseNotNorm}
h(t)=c_1t+c_2t^2+c_3t^3+c_4t^4, \quad t \in [0,1]
\end{equation}
which can finally be normalized to yield the fitted base segment of velocity
\begin{equation}\label{base}
g(t)=\frac{h(t)}{\int_{0}^{1}h(\tau)d\tau}
\end{equation}

\begin{figure}
\scalebox{0.55}[0.55]{\includegraphics{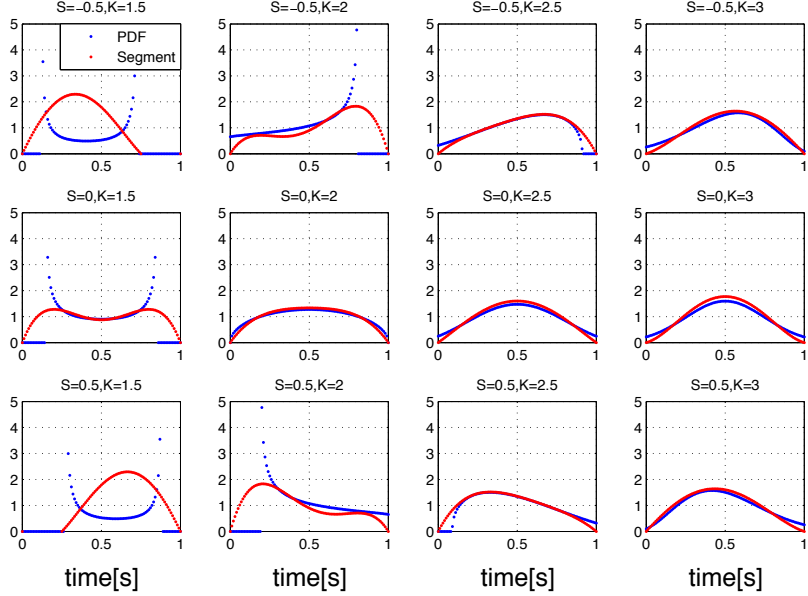}}\centering
\caption{\label{SK} Construction of fitted base segments of velocity by matching a desired velocity profile. The blue curve refers to the desired PDF $\mathcal{P}$ with specified skewness and kurtosis (mean value $\mu=0.5$ and standard deviation $\sigma=0.25$ are the same in all the sub-figures), while the red one represents the fitted base segment $g$. $S$ and $K$ stand for skewness and kurtosis, respectively. The values of skewness for human participants generally range between $-0.5$ and $0.5$, in comparison with those of kurtosis varying from $1.5$ to $3$, respectively \cite{noy14}.}\centering
\end{figure}

Figure~\ref{SK} presents twelve fitted base segments of velocity obtained for different values of skewness and kurtosis.
%The base segments of velocity fail to match the desired PDFs when the value of kurtosis is equal to $1.5$. This is because of the relatively lower order polynomial incapable of matching the nonlinear curve. The matching performance improves gradually as the value of kurtosis increases.

\section{Computational Architecture}\label{sec:ca}
The \emph{in-silico} generation of velocity trajectories in solo motion with prescribed kinematic features allows to develop a customized VP able to interact with a HP in JI condition, with the former exhibiting the desired motor signature of a given human participant. In this section we present the computational architecture of the VP to shed light on the relationship between the mechanism underlying the generation of solo and joint improvised motions. Compared with previous approaches \cite{chao_cdc15,chao_ji15,chao_cdc14}, the one we propose here allows the virtual player to spontaneously reproduce the motor signature of a given HP, without making use of pre-recorded time series of her/his motion in solo condition. This overcomes the drawback given by the need for a large database of human solo trajectories, and endows the VP with a wider repertoire of motor signatures, thus opening the possibility of exploring the effects of continuously changing its kinematic features during the interaction with another partner.

\begin{figure}
\scalebox{0.8}[0.8]{\includegraphics{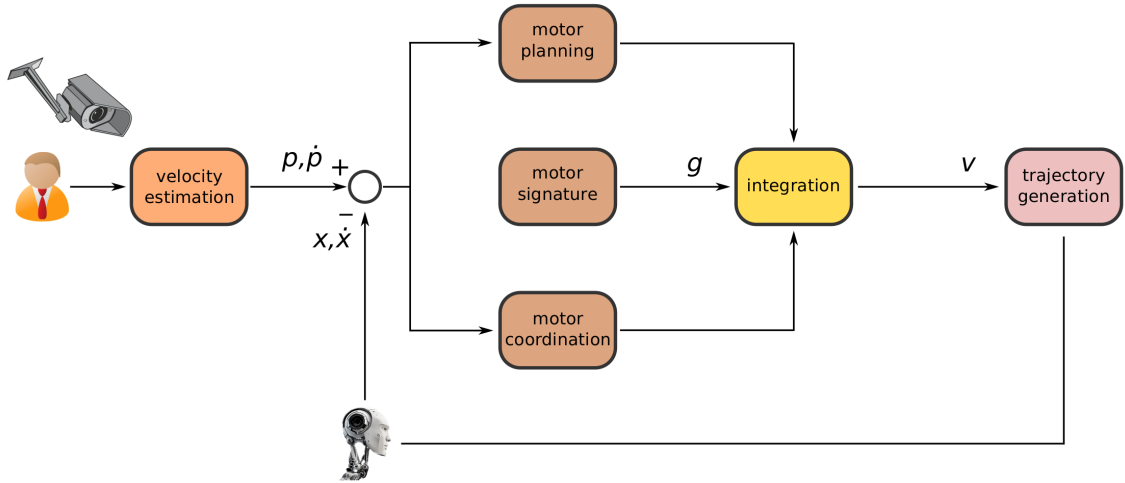}}\centering
\caption{\label{ca}Computational architecture of the VP in the mirror game. Variables $p$ and $\dot{p}$ represent position and velocity of the human player, while $x$ and $\dot{x}$ those of the virtual player; $g$ represents the fitted base segment and $v$ the actual velocity segment of the VP, respectively.}
\end{figure}

The proposed computational architecture (shown in Figure~\ref{ca}) consists of six function blocks described in details as follows.
\begin{enumerate}
  \item Velocity Estimation: The position trajectory of a HP detected by a camera is sent to this block, where her/his corresponding velocity time series is estimated and split into a series of velocity segments \cite{noy14}. Then position and velocity errors between HP and VP are computed.
  \item Motor Planning: This block determines the direction, duration and displacement of the velocity segments for the VP.
  \item Motor Signature: This block reflects the kinematic features of a human player as it generates the fitted base segment $g$. It allows to change the motor signature of the VP by resetting the desired values of $\mu$, $\sigma$, $s$ and $k$.
  \item Motor Coordination: This block allows for mutual adaptation, imitation and synchronization between the virtual player and its partner in joint improvisation condition.
  \item Movement Integration: The actual velocity segments $v$ of the VP are generated by integrating the movement constraints on motor planning, motor signature and motor coordination.
  \item Trajectory Generation: The movement trajectory of the VP is generated by chronologically assembling the integrated velocity segments.
\end{enumerate}

\subsection{Generation of solo motions}\label{sec:solo}
While playing the mirror game in solo condition, the VP produces a prescribed motion without taking into consideration that of any other participant. Thus, the generation of solo motions can be regarded as a special case of joint motion where there is no motor coordination.
%The base segments allow to generate smooth velocity segments with a specified motor signature.
Specifically, the actual segments of velocity $v$ are derived from the the fitted base segments $g$ after integrating the displacement with the duration of time, and after assigning a motion direction.

Let $\Delta t$ denote the duration of the time interval for each velocity segment, which is a random variable with probability distribution function $\lambda(\tau)$ that can be obtained by statistically analyzing the solo recordings of a human participant. The probability of $\Delta t$ belonging to the interval $[\underline{t},\bar{t}]$ can be calculated as
\begin{equation}\label{Dt}
P\left(\underline{t}\leq \Delta t\leq \bar{t}\right)=\int_{\underline{t}}^{\bar{t}}\lambda(\tau)d\tau
\end{equation}
According to experimental data, the average time interval for velocity segments is equal to $0.8$s, with a standard deviation of $0.7$s \cite{noy14}. In addition, let $\Delta l$ represent the segment displacement (i.e., position mismatch between the starting point and terminal point of each segment), which is a random variable with probability distribution function
$\xi(s)$. Likewise, the probability of $\Delta l$ belonging to the interval $[\underline{l},\bar{l}]$ is given by
 \begin{equation}\label{Dl}
P\left(\underline{l}\leq \Delta l\leq \bar{l}\right)=\int_{\underline{l}}^{\bar{l}}\xi(s)ds
\end{equation}
Regardless of the motion direction, the variant of a fitted base segment can be calculated as
\begin{equation}\label{variantEquationNoDir}
\frac{\Delta l}{\Delta t}\cdot g\left(\frac{t}{\Delta t}\right)
\end{equation}
where $g$ is defined in Equation \eqref{base}. Figure~\ref{variant} shows a fitted base segment of velocity and possible eight variants for it with respect to time duration $\Delta t$ and displacement $\Delta l$.

\begin{figure}
\scalebox{0.6}[0.6]{\includegraphics{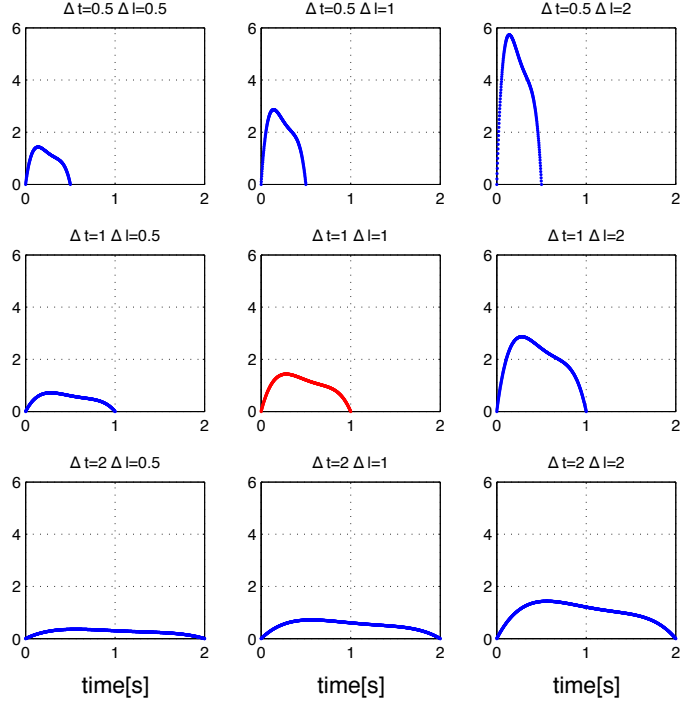}}\centering
\caption{\label{variant} Variants of a fitted base segment of velocity with respect to time duration $\Delta t$ and displacement $\Delta l$. The red curve represents $g$, while the blue ones represent its variants obtnained for different values of $\Delta t$ and $\Delta l$ as described in Equation \eqref{variantEquationNoDir}.}
\end{figure}

Since HPs tend to move around the middle part of the string in solo trials \cite{piotr15}, the movement direction of the VP is determined by
\begin{equation}\label{dire}
\vec{D}=\left\{
       \begin{array}{ll}
         \mbox{sign}(x-p_b), & \hbox{$|x-p_a|>|x-p_b|$;} \\
         \mbox{sign}(x-p_a), & \hbox{$|x-p_a|<|x-p_b|$;} \\
         \mbox{either}, & \hbox{$|x-p_a|=|x-p_b|$,}
       \end{array}
     \right.
\end{equation}
where $x$ denotes the position of the VP, and $p_a<p_b$ represent position bounds. An actual velocity segment $v$ is then constructed as follows
\begin{equation}\label{var}
v(t)=\vec{D}\cdot\frac{\Delta l}{\Delta t}\cdot g\left(\frac{t}{\Delta t}\right)=\vec{D}\cdot\frac{\Delta l\cdot h(\frac{t}{\Delta t})}{\Delta t \cdot \int_{0}^{1}h(\tau)d\tau} \quad t\in [0, \Delta t]
\end{equation}

Solo motions are generated by consecutively joining the actual velocity segments together. Finally, the position trajectory of the VP is produced as follows
\begin{equation}\label{pos}
x(t)=x_0+\int_{0}^{t}v(\tau)d\tau \quad t\in [0, \Delta t]
\end{equation}
where $x_0$ denotes the initial position of the generated segment. Table~\ref{table} summarizes the solo motion algorithm (SMA) employed for the VP to produce human-like solo movements with prescribed kinematic features.

\begin{table}
 \caption{\label{table} Solo Motion Algorithm (SMA).}
 \begin{center}
 \begin{tabular}{lcl} \hline
  1: Set skewness $s$, kurtosis $k$ and running time $T_s$ \\
  2: Generate a fitted base segment $g(t)$ with \eqref{fit}, \eqref{hterm}, \eqref{tildeb}, \eqref{baseNotNorm} and \eqref{base} \\
  3: \textbf{while} ($\mbox{time}<T_s$) \\
  4: ~~~~~~~Determine the segment duration $\Delta t$ with (\ref{Dt}) \\
  5: ~~~~~~~Determine the segment displacement $\Delta l$ with (\ref{Dl}) \\
  6: ~~~~~~~Choose the movement direction $\vec{D}$ with (\ref{dire}) \\
  7: ~~~~~~~Generate an actual velocity segment $v(t)$ with (\ref{var}) \\
  8: ~~~~~~~Output the position trajectory $x(t)$ with (\ref{pos}) \\
  9: \textbf{end while} \\ \hline
 \end{tabular}
 \end{center}
\end{table}

\subsection{Generation of joint improvised motions}\label{sec:joint}

%In this section, we propose a joint improvisation algorithm based on the computational architecture, which allows the VP to generate the improvised motions while interacting with another agent (either HP or VP) in the mirror game. In this scenario, the VP or computer avatar has to detect the position of its partner. In particular, the VP needs to implement an online algorithm in order to imitate, adapt to and synchronize the movement of its partner, thereby achieving the joint improvisation. Thus, our focus is placed on the movement estimation of the partner and the algorithm design.

While playing the mirror game in JI condition, the VP interacts with its partner while exhibiting some prescribed kinematic features (motor signature). Based on the position and velocity mismatch between the two players, the proposed computational architecture allows the virtual player to imitate, adapt to and synchronize with the movement of its partner, thereby achieving joint improvisation \cite{chao_ji15}.

%Specifically, the architecture evaluates the mismatch between the two players and plans the next velocity segment when the velocity of the VP reaches zero.
Similarly to SMA, the segment duration and displacement are determined by Equations \eqref{Dt} and \eqref{Dl}, respectively. As the two participants attempt to achieve movement synchronization, the movement direction
of the VP is given by
\begin{equation}\label{dire_ji}
\vec{D}=\mbox{sign}(p-x)
\end{equation}
where $x$ denotes the position of the virtual player and $p$ refers to that of the other agent. When $p=x$, the VP is provided with a random direction.

The motor coordination block enables the VP to imitate and adapt to the movement of its partner in order to synchronize their joint movements, while the two participants consciously adjust their way of moving (i.e., the profile of their velocity segments during the game).
It has been suggested that an optimal feedback control driving the VP is equivalent to a PD control when the optimization interval is small enough, and that the nonlinear HKB equation originally introduced in \cite{hkb85} is not significantly better than a double integrator as end effector model of the VP in the mirror game \cite{fran_ecc16}.

For the sake of simplicity, in this work we employ a double integrator with PD control to describe the motion of the VP and design the online algorithm as follows
\begin{equation}\label{learn}
\ddot{x}=c_s(v-\dot{x})+c_v(\dot{p}-\dot{x})+c_p(p-x)+\kappa(x,\epsilon)
\end{equation}
where $v$ is the actual velocity segment generated by Equation \eqref{var}, $x$ and $\dot{x}$ represent position and velocity of the VP, $p$ and $\dot{p}$ those of its partner, with $c_s$, $c_v$, $c_p$ and $k$ being tunable positive parameters. The first three terms on the right-hand side of Equation \eqref{learn} account for preferred movement, mutual imitation and movement synchronization, respectively \cite{chao_ji15}, whereas $\kappa(x,\epsilon)$ is used to constrain the movement of the VP within the admissible range of motion:
$$
\kappa(x,\epsilon)=\left\{
       \begin{array}{ll}
         c_r|x-p_b|, & \hbox{$x-p_a\leq\epsilon$} \\
        -c_r|x-p_a|, & \hbox{$p_b-x\leq\epsilon$} \\
                0, & \hbox{otherwise}
       \end{array}
     \right.
$$
with $c_r$ and $\epsilon$ being tunable positive parameters.
When the distance between the VP and its closer bound is lower than $\epsilon$, the term $\kappa(x,\epsilon)$ drives the VP with strength $c_r$ towards the middle point of the position range.

By solving equation (\ref{learn}), the position trajectory of the VP is given by
\begin{equation}\label{posj}
x(t)=x_0+\int_{0}^{t}\int_{0}^{\tau}\ddot{x}(s) ds \ d\tau,  \quad t\geq 0
\end{equation}
where $x_0$ refers to the initial position of the VP. Table~\ref{table_ji} summarizes the joint improvisation algorithm (JIA) employed for the VP to perform JI with another agent in the mirror game.

\begin{table}
 \caption{\label{table_ji} Joint Improvisation Algorithm (JIA).}
 \begin{center}
 \begin{tabular}{lcl} \hline
  1: Set skewness $s$, kurtosis $k$ and running time $T_s$ \\
  2: Generate a fitted base segment $g(t)$ with \eqref{fit}, \eqref{hterm}, \eqref{tildeb}, \eqref{baseNotNorm} and \eqref{base} \\
  3: \textbf{while} ($\mbox{time}<T_s$) \\
  4: ~~~~~~~Determine the segment duration $\Delta t$ with (\ref{Dt}) \\
  5: ~~~~~~~Determine the segment displacement $\Delta l$ with (\ref{Dl}) \\
  6: ~~~~~~~Choose the movement direction $\vec{D}$ with (\ref{dire_ji}) \\
  7: ~~~~~~~Generate an actual velocity segment $v(t)$ with (\ref{var}) \\
  8: ~~~~~~~Evaluate the acceleration $\ddot{x}(t)$ with (\ref{learn})  \\
  9: ~~~~~~~Output the position trajectory $x(t)$ with (\ref{posj}) \\
  10: \textbf{end while} \\ \hline
 \end{tabular}
 \end{center}
\end{table}

\section{Experimental Validation}\label{sec:exp}

In order to test and validate the proposed computational architecture, in this section we compare solo and joint improvised motions of human players with those generated by their respective customized virtual agents. The numerical algorithms are implemented in Matlab R2010a.

\subsection{Solo motions}

Figures~\ref{solo_sim_fig}(a) and \ref{solo_sim_fig}(b) show position and velocity time series of a HP performing a $60$s solo trial. The HP moves the ball along the string within the normalized range $[-1,1]$. The sampling frequency of the camera is $100$ Hz. According to data analysis of the velocity segments shown in Fig.~\ref{solo_sim_fig}(b), the averaged mean value $\bar{\mu}$, standard deviation $\bar{\sigma}$, skewness $\bar{s}$ and kurtosis $\bar{k}$ are $0.50$, $0.23$, $-0.08$ and $2.11$, respectively. We then choose three time points $t_1=\bar{\mu}-\bar{\sigma}$, $t_2=\bar{\mu}$ and $t_3=\bar{\mu}+\bar{\sigma}$ to construct the base segment of velocity. In particular, the Matlab function ``pearspdf'' is employed to compute the values of the desired PDF $\mathcal{P}(t,\bar{\mu},\bar{\sigma},\bar{s},\bar{k})$ at the selected time points.

\begin{figure}[!h]
\scalebox{0.9}[0.9]{\includegraphics{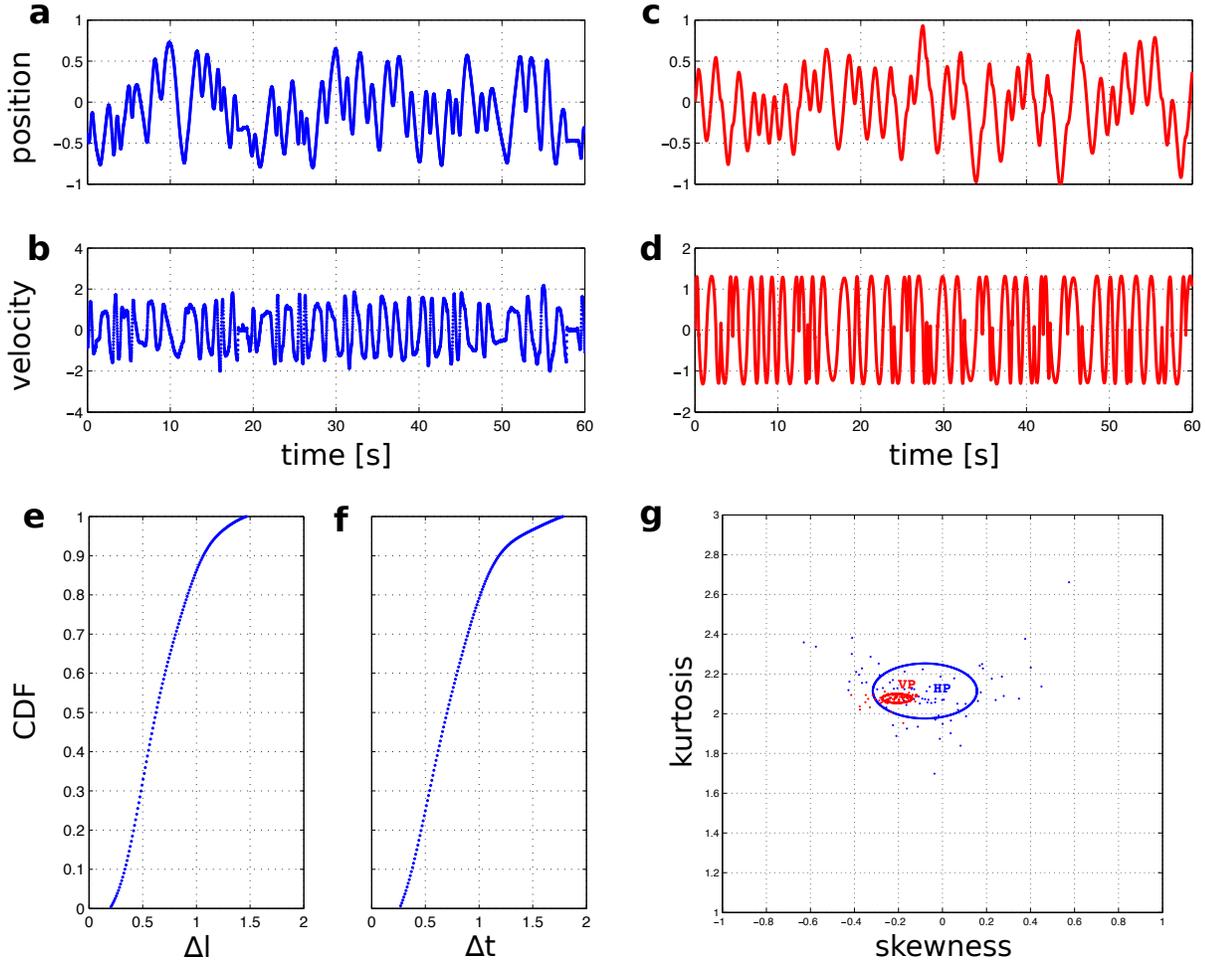}}\centering
\caption{\label{solo_sim_fig} Experimental validation -- solo motions. Position (a) and velocity (b) time series of the HP. Position (c) and velocity (d) time series of the VP. CDFs of $\Delta l$ (e) and $\Delta t$ (f) for the HP. (g) Visualization of solo motion for the HP and her/his customized VP in the S-K plane: blue dots correspond to velocity segments of the HP, whereas red ones refer to those of the VP. The two corresponding ellipses are evaluated by means of Equation \eqref{ellip}.}
\end{figure}

The probability distributions of $\Delta l$ and $\Delta t$ of the velocity segments in Fig.~\ref{solo_sim_fig}(b) are described by cumulative distribution functions (CDF) shown in Figs.~\ref{solo_sim_fig}(e) and \ref{solo_sim_fig}(f), respectively.

Figures~\ref{solo_sim_fig}(c) and \ref{solo_sim_fig}(d) show position and velocity time series of a VP fed with the same motor signature as that in Fig.~\ref{solo_sim_fig}(b) and driven by the SMA described in Table~\ref{table}. The velocity segments generated by the SMA resemble those of the HP in terms of profile, yet are slightly smoother. A visible difference is that the HP sometimes stays still during the game, whilst the VP always keeps moving.

Figure~\ref{solo_sim_fig}(g) shows skewness and kurtosis of normalized velocity segments for both the HP and her/his customized VP in the S-K plane.
%The ellipses are constructed by the mean values of the skewness and kurtosis of the normalized velocity segments (center of the ellipse) and the standard deviations of the skewness and the kurtosis of the normalized velocity segments (two axes of the ellipse), and it gives a rough description of motor signature or movement pattern of each human participant.
It is possible to appreciate that most velocity segments of the VP are mapped into the ellipse representing the kinematic features of the HP, thus confirming hat the VP succeeds in reproducing the motor signature of the specified HP. Moreover, the VP segments are clustered together, whereas those of the HP are scattered in the S-K plane, thus implying that solo motions of human players are more flexible and diverse than those of their customized computer avatar.

\subsection{Joint improvised motions}
Next, we present numerical validation of the JIA described in Table~\ref{table_ji} for both HP-VP and VP-VP dyads in a joint improvisation condition.

\subsubsection{HP-VP dyad}

The experimental set-up allowing a HP to perform joint improvisation with a VP is shown in Fig.\ref{cahpvp}. The parameter setting for the VP is given as follows: $\bar{\mu}=0.51$, $\bar{\sigma}=0.23$, $\bar{s}=-0.09$, $\bar{k}=2.14$, $c_s=2$, $c_v=5$, $c_p=3$, $c_r=5$ and $\epsilon=0.1$.

\begin{figure}[!h]
\scalebox{0.65}[0.65]{\includegraphics{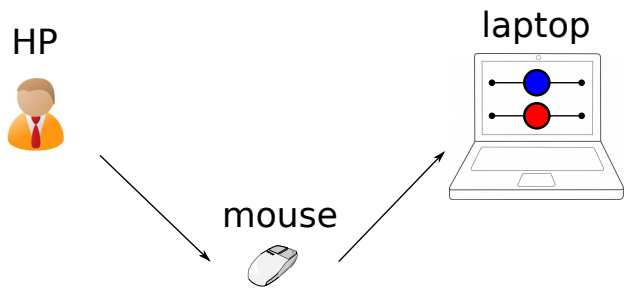}}\centering
\caption{\label{cahpvp}Experimental set-up of HP-VP interaction in the mirror game. The HP is required to sit in front of a laptop, which implements the JIA in Matlab. The blue circle represents the position of the HP, which is controlled by means of a mouse, while the red circle represents that of the VP, which is generated by the JIA. }
\end{figure}

Figures~\ref{jihpvp}(a) and \ref{jihpvp}(b) show position and velocity time series of HP and VP, respectively. Some synchronized segments can be observed in the position trajectories, which implies the occurrence of joint improvisation between HP and VP.

\begin{figure}[!h]
\scalebox{0.75}[0.75]{\includegraphics{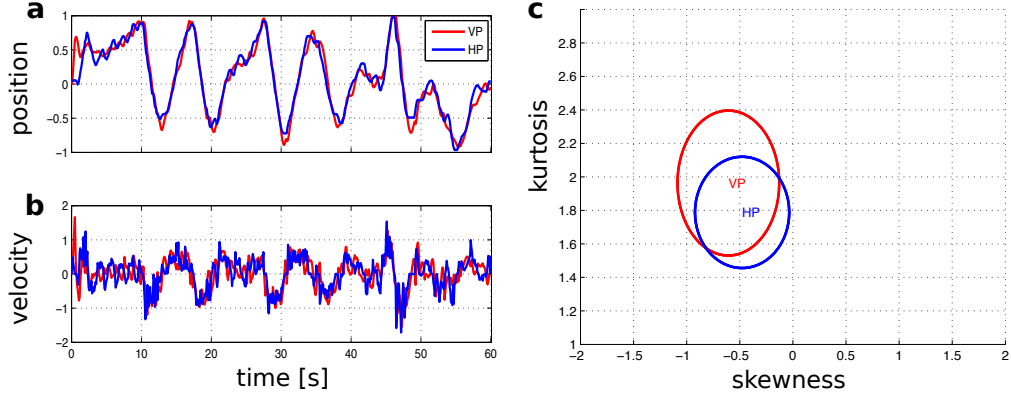}}\centering
\caption{\label{jihpvp} Experimental validation -- JI trial between HP (blue) and VP (red). Position (a) and velocity (b) time series of HP and VP. (c) Visualization of the JI motion between HP and VP in the S-K plane.}
\end{figure}

The two ellipses featuring the movement patterns of the two interacting agents are shown in Fig.~\ref{jihpvp}(c). It is possible to appreciate that they are largely overlapping in the S-K plane, implying that the two players exhibit similar kinematic features while interacting in the mirror game.

\subsubsection{VP-VP dyad}

In order to validate the capability of the proposed computational architecture to reproduce the kinematic characteristics observed when two human players (HP1 and HP2) perform the mirror game in a joint improvisation condition,
we numerically simulate a VP-VP trial. The evaluation method is the same as that proposed in \cite{chao_ji15}. Specifically, two virtual players (VP1 and VP2) are enabled to play the mirror game in a JI condition, with VP1 (VP2) being fed with the motor signatures of HP1 (HP2), respectively (Fig.~\ref{ca2vp}).

\begin{figure}[!h]
\scalebox{0.75}[0.75]{\includegraphics{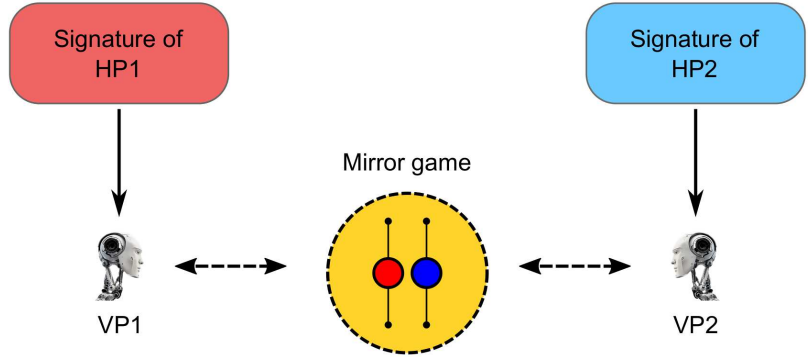}}\centering
\caption{\label{ca2vp}Schematic diagram of VP-VP interaction in the mirror game.}
\end{figure}

The two virtual players are driven by the JIA with the following parameters setting: $\bar{\mu}_1=0.51$, $\bar{\sigma}_1=0.22$, $\bar{s}_1=-0.18$ and $\bar{k}_1=2.13$ for VP1, $\bar{\mu}_2=0.53$, $\bar{\sigma}_2=0.25$, $\bar{s}_2=-0.18$ and $\bar{k}_2=1.87$ for VP2, and $c_s=1.5$, $c_v=3.6$, $c_p=4.9$, $c_r=5$ and $\epsilon=0.1$ for both VPs. Figures~\ref{jivpvp}(a) and \ref{jivpvp}(b) show position and velocity time series of the two human players, while Figures~\ref{jivpvp}(c) and \ref{jivpvp}(d) those of the two customized virtual agents, respectively.
VP1 and VP2 succeed in reproducing the joint improvised movement (synchronized segments) as occurred in the HP1-HP2 interaction.

\begin{figure}[!h]
\scalebox{0.85}[0.85]{\includegraphics{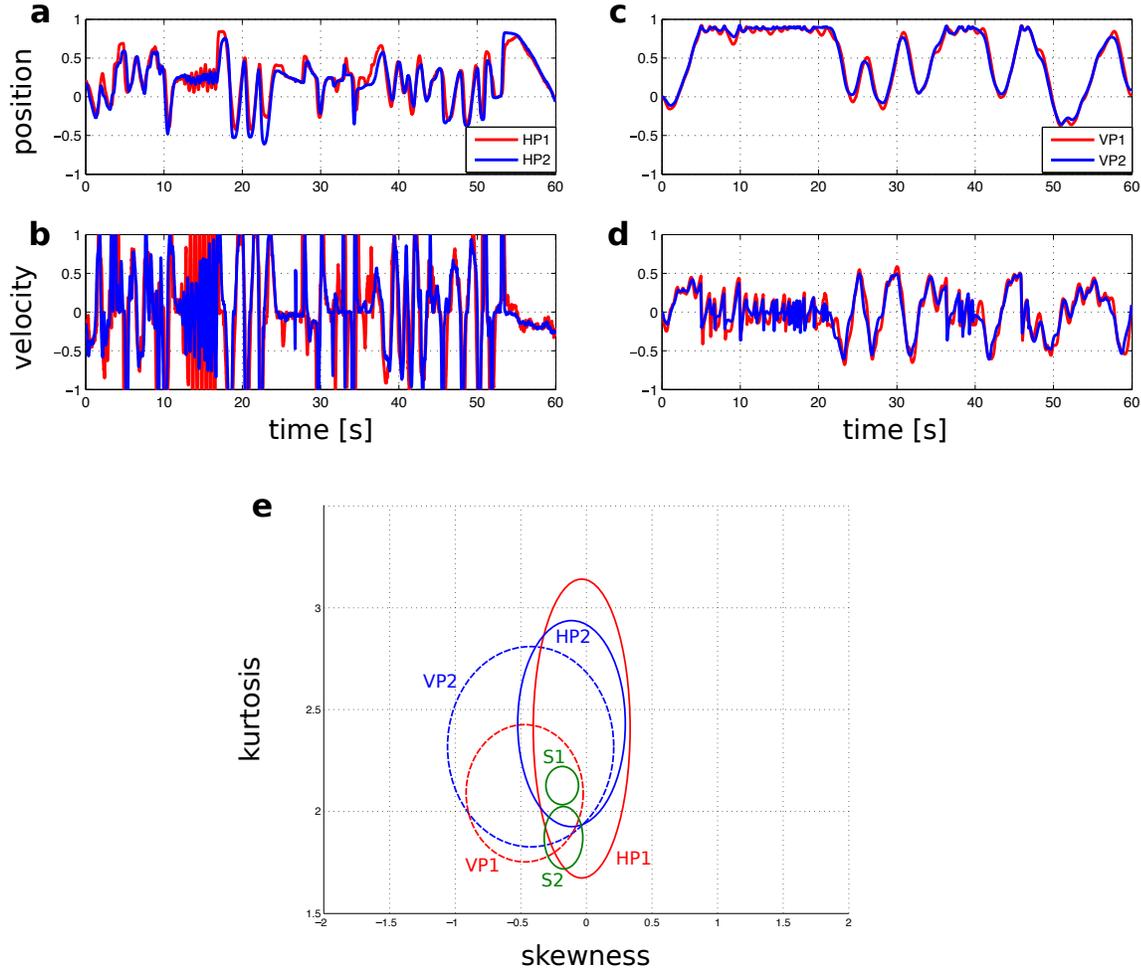}}\centering
\caption{\label{jivpvp} Experimental validation -- JI trial in a human (HP1 and HP2) and in a virtual (VP1 and VP2) dyad. Position (a) and velocity (b) time series of the human dyad (HP1 in red and HP2 in blue). Position (c) and velocity (d) time series of the virtual dyad (VP1 in red and VP2 in blue). (e) Visualization of solo and JI motions for the human pair and the customized virtual pair in the skewness-kurtosis plane. VP segments are mapped into dashed-line ellipses (VP1 in red and VP2 in blue), HP segments into solid-line ellipses (HP1 in red and HP2 in blue), and their corresponding kinematic signatures in solo motion (S1 and S2) into green solid-line ellipses.}
\end{figure}

Figure~\ref{jivpvp}(e) describes the transition of motor signatures from solo to JI motion. The kinematic features of the human players in solo condition are separate, while those in JI condition converge towards each other and are more variable. Notably, similar remarks can be made for the kinematic features exhibited by the virtual players, thus indicating the desirable matching performance of the VPs driven by the proposed computational architecture.

\section{Conclusions}\label{sec:con}
We developed a systematic approach to account for the generation of human solo motions, joint improvised motions and the transition of their kinematic characteristics in the mirror game. In so doing, a computational architecture was designed to describe the mechanisms underlying solo and joint improvised movements, which provides a new insight into the shift of kinematic patterns from individuality to joint improvisation.

We observed how, despite being characterized by different motor signatures in solo motion, players tend to imitate their respective kinematic features when interacting together, and exhibit a wider repertoire of movements. Such results were successfully captured by the proposed computational architecture, thus  opening the possibility of testing \emph{in-silico} interactions between different individuals in a number of different configurations.
Theoretical analysis was also presented to guarantee the existence of base segments of velocity characterizing any individual motor signature.

Future work may include the consideration of motor learning in joint actions and the generalization of this approach to other experimental paradigms for investigating socio-motor coordination, both in dyads \cite{fran16a} and in larger ensembles \cite{michael_tac15,michael_iet15,fran16b,fran16c}.

\section*{Acknowledgments}

The authors wish to thank Prof. Krasimira~Tsaneva-Atanasova and Dr. Piotr~S\l{}owi\'{n}ski at the University of Exeter, UK for the insightful discussions and thank Prof. Benoit Bardy, Prof. Ludovic Marin and Dr. Robin Salesse at the University of Montpellier, France for collecting the experimental data that is used to validate the approach presented in this paper. This work is supported by National Nature Science Foundation of China under Grant 61374053, by the Innovation and Technology Commission under Grant No. UIM/268, and by the Research Grants Council, Hong Kong, through the General Research Fund under Grant No. 17205414.

\section*{Appendix}

In what follows we present the details on the proof of Proposition \ref{real_sol}.

\begin{proof}
$\mathcal{F}(\mu,\sigma,s)$ and $\mathcal{G}(\mu,\sigma,k)$ in Equation \eqref{sys} can be simplified as follows:
\begin{equation}
\mathcal{F}(\mu,\sigma,s)=-\mu^3-3\mu\sigma^2-s\sigma^3+\frac{3}{2}\mu^2+\frac{3}{2}\sigma^2-\frac{9}{14}\mu+\frac{1}{14}
\end{equation}
and
\begin{equation}
\mathcal{G}(\mu,\sigma,k)=3\mu^4+6\mu^2\sigma^2-k\sigma^4-6\mu^3-6\mu\sigma^2+\frac{89}{21}\mu^2+\frac{5}{3}\sigma^2-\frac{26}{21}\mu+\frac{5}{42}
\end{equation}
which can be rewritten as
\begin{equation}
\mathcal{F}_1(\mu,\sigma,s):=\frac{\mathcal{F}(\mu,\sigma,s)}{\sigma^3}=-s+\left(\frac{3}{28\sigma^2}-3\right)\frac{\mu-1/2}{\sigma}- \left(\frac{\mu-1/2}{\sigma}\right)^3
\end{equation}
and
\begin{equation}
\mathcal{G}_1(\mu,\sigma,s):=\frac{\mathcal{G}(\mu,\sigma,k)}{\sigma^4}=
-k+\frac{1}{6\sigma^2}-\frac{1}{336\sigma^4}+\left(6-\frac{11}{42\sigma^2}\right)\left(\frac{\mu-1/2}{\sigma}\right)^2+
3\left(\frac{\mu-1/2}{\sigma}\right)^4 \, .
\end{equation}

From these representations, it is evident that if the system has a solution $ (\mu, \sigma) \in \mathbb C^2 $ then it also has a solution $ (1-\mu, -\sigma) $. Furthermore, with the aid of substitution
\begin{equation}\label{UTsubs1}
M=\frac{\mu-1/2}{\sigma},\ \eta=\frac{1}{\sigma^2}
\end{equation}
the expressions for $\mathcal{F}_1(\mu,\sigma,s)$ and $\mathcal{G}_1(\mu,\sigma,k)$ can be further simplified as
\begin{equation}
\mathcal{F}_1(M,\eta,s)=-s+\left(\frac{3}{28}\eta-3\right)M-M^3 \ ,
\end{equation}
and
\begin{equation}
\mathcal{G}_1(M,\eta,k)=-k+\frac{1}{6}\eta-\frac{1}{336}\eta^2 +\left(6-\frac{11}{42}\eta\right)M^2+3\,M^4,
\end{equation}
respectively. By solving equation $ \mathcal{F}_1(M,\eta,s)=0 $ with respect to $ \eta $, we obtain
\begin{equation}
\eta= \frac{28}{3}\left(M^2+3+\frac{s}{M}\right)
\label{UTeta}
\end{equation}
Substitution of Equation \eqref{UTeta} into $ \mathcal{G}_1(M,\eta,k)=0 $ yields
\begin{equation}
\mathcal{G}_2(M,s,k)=0
\label{UTfinal}
\end{equation}
with
\begin{equation}
\mathcal{G}_2(M,s,k)=8\,M^6-36\,M^4-80\,sM^3+ (63-27\, k)M^2-7\,s^2 \, .
\end{equation}

According to data analysis of human movements in the mirror game, skewness $s$ and kurtosis $k$ belong to the intervals $(-0.5, 0.5)$ and $(1.5, 3)$, respectively \cite{noy14}. For any selection of values $s\in (0, 0.5)$ and $k\in(1.5, 3)$, Equation \eqref{UTfinal} has a positive zero $M=M_0$ in the interval $(0,2\sqrt{3})$ due to the conditions
\begin{equation}
\mathcal{G}_2(0,s,k)<0,
\end{equation}
and
\begin{equation*}
\mathcal{G}_2(2\sqrt{3},s,k)=9396-1920\sqrt{3}s-324\,k-7\,s^2
\end{equation*}
\begin{equation}
=(9065-1920\sqrt{3}s)+324(1-k)+7(1-s^2)>0 \, .
\end{equation}

Therefore, from Equation \eqref{UTeta} it is clear that also $\eta$ is positive, hence the second equation from \eqref{UTsubs1} can be resolved in real numbers with respect to $ \sigma $. The corresponding value for $\mu$ can be then found in the first equation from (\ref{UTsubs1}), which implies that $\mathcal{F}(\mu,\sigma,s)=0$ and $\mathcal{G}(\mu,\sigma,k)=0$ have real roots $\mu$ and $\sigma$.
\end{proof}

\end{document}